\def\ps@headings{%
\def\@oddhead{\mbox{}\scriptsize\rightmark \hfil \thepage}%
\def\@evenhead{\scriptsize\thepage \hfil \leftmark\mbox{}}%
\def\@oddfoot{}%
\def\@evenfoot{}}
\newtheorem{mydef}{Definition}
\newtheorem{lem}{Lemma}
\newtheorem{thm}{Theorem}
\newtheorem{cor}{Corollary}
\begin{document}
\title{An Upper Bound on the Convergence Time for Quantized Consensus of Arbitrary Static Graphs}

%
       
       \author{
\IEEEauthorblockN{Shang Shang\IEEEauthorrefmark{1},  Paul Cuff\IEEEauthorrefmark{1}, Pan
Hui\IEEEauthorrefmark{2}
and Sanjeev Kulkarni\IEEEauthorrefmark{1} }\\ \IEEEauthorblockA{\IEEEauthorrefmark{1}Department of Electrical Engineering, 
Princeton University \\ Princeton NJ, 08540, U.S.A.
} \\ \IEEEauthorblockA{\IEEEauthorrefmark{2}Department of Computer Science and Engineering, The Hong Kong University of Science and Technology, \\
Hong Kong\\
\IEEEauthorrefmark{1}\{sshang, cuff, kulkarni\}@princeton.edu, \IEEEauthorrefmark{2}panhui@cse.ust.hk}
}

\maketitle
\begin{abstract}
We analyze a class of distributed quantized consensus algorithms for arbitrary static networks. In the initial setting, each node in the network has an integer value. Nodes exchange their current estimate of the mean value in the network, and then update their estimation by communicating with their neighbors in a limited capacity channel in an asynchronous clock setting. Eventually, all nodes reach consensus with quantized precision. We analyze the expected convergence time for the general quantized consensus algorithm proposed by Kashyap et al \cite{Kashyap}. We use the theory of electric networks, random walks, and couplings of Markov chains to derive an $O(N^3\log N)$ upper bound for the expected convergence time on an arbitrary graph of size $N$, improving on the state of art bound of $O(N^5)$ for quantized consensus algorithms. Our result is not dependent on graph topology. Example of complete graphs is given to show how to extend the analysis to graphs of given topology. This is consistent with the analysis in \cite{Drief}.   
\end{abstract}

\begin{keywords}
Distributed quantized consensus, gossip, convergence time
\end{keywords}
\let\thefootnote\relax\footnote{This work was presented in part at IEEE INFOCOM 2013.}

\section{Introduction}		
\label{sec:intro}
Over the past decade, the problem of quantized consensus has received  significant attention \cite{Boyd, Benezit, Kashyap, lavaei2012quantized,carli2010gossip, frasca2008average, Cai, cai2012average }. It models averaging in a network with a limited capacity channel \cite{Kashyap}. Distributed algorithms are attractive due to their flexibility, simple deployment and the lack of central control. This problem is of interest in the context of coordination of autonomous agents, estimation, distributed data fusion on sensor networks, peer-to-peer systems, etc. \cite{Drief,Boyd}. It is especially relevant to remote and extreme environments where communication and computation are limited, for example, in a decision-making sensor network \cite{Du}.

This work is motivated by a class of quantized consensus algorithms in \cite{Kashyap}: nodes randomly and asynchronously update local estimate and exchange information. Unlike the distributed algorithm in \cite{Nedic}, where the sum of values in the network is not preserved, Kashyap et al. proposed an algorithm guaranteeing convergence with limited communication, more specifically,  only involving quantization levels \cite{Kashyap}. This is a desired property in a large-scale network where memory is limited, communication between nodes is expensive and no central control is available to the network. Also, this distributed algorithm is designed in a privacy-preserving manner: during the process, the local estimation on the average value is exchanged without revealing the initial observation from nodes. Analysis of convergence time on the complete graph and line graph is given in the original paper in \cite{Kashyap}, and an $O(N^5)$ bound was claimed in \cite{Zhu} by creating a random walk model. 

In this paper, unlike the \emph{natural random walk} model claimed in \cite{Zhu}, we construct a \emph{biased lazy random walk model} for this random communication process to analyze the multi-level quantized consensus problem with the use of Lyapunov functions \cite{Kashyap}. By novelly utilizing the relation between commuting time of a random walk and electric networks \cite{Aldous}, we derive an upper bound on the hitting time of a biased random walk. Several coupled Markov processes are then constructed to help the analysis. Thus we improve the state of art bound in \cite{Zhu} from $O(N^5)$ to $O(N^3\log N)$. In \cite{Drief}, proving through different methods, the authors introduced a function $\delta(G, \alpha)$ depending on the graph structure and voting margin to provide an upper bound on the convergence time of binary consensus algorithm, but did not provide a universal upper bound on an arbitrary graph. Unlike the convergence time bound in \cite{Drief}, which depends on the network topologies and the location of eigenvalues of some contact rate matrices, our result provides a universal upper bound on the convergence time of quantized consensus. Notably, a deterministic protocol was proposed in \cite{hendrickx2011distributed}, which achieves quantized consensus in $O(N^2)$. However, it cannot be extended beyond fixed graphs as the algorithms discussed in this paper, as analyzed in \cite{Zhu}.

The contribution of this paper is as follows:
\begin{itemize}
\item A polynomial upper bound of $O(N^3\log N)$ for the quantized consensus algorithm. It is, to the best knowledge of the authors, the tightest bound in literature for the quantized consensus algorithm proposed in \cite{Benezit,Kashyap}. We use the degree of nodes on the shortest path  on the graph to improve the bound on the hitting time of the biased random walk.       
\item The analysis for arbitrary graphs is extended to a tighter bound for certain network topologies by computing the effective resistance between a pair of nodes on the graph. This is attractive because we can then apply results from algebraic graph theory \cite{Beineke,Godsil} to compute the effective resistance easily on the given graph structure.
\end{itemize}
The remainder of this paper is organized as follows. Section 2 describes the algorithm proposed in \cite{Kashyap}, and formulates the convergence speed problem. In Section 3, we derive our polynomial bound for this class of algorithms. We provide our conclusions in Section 4.


\section{Problem Statement}
\label{sec:problem}

A network is represented by a connected graph $\mathcal{G = (V,E)}$, where $\mathcal{V} =\{1,2,...,N\}$ is the set of nodes and $\mathcal{E}$ is the set of edges. $(i,j)\in \mathcal{E}$ if nodes $i,j$ can communicate with each other. $\mathcal{N}_i$ is the set of neighbors of node $i$. 

Consider a network of $N$ nodes, labeled 1 through $N$. As proposed in \cite{Boyd,Kashyap,Benezit}, each node has a clock which ticks according to a rate 1 exponential distribution. By the superposition property for  the exponential distribution, this set up is equivalent to a single global clock with a rate $N$ exponential distribution ticking at times $\{Z_k\}_{k\ge0}$. The communication and update of  states only occur at  $\{Z_k\}_{k\ge0}$. When the clock of node $i$ ticks, $i$ randomly chooses a neighbor $j$ from the set $\mathcal{N}_i$.  We say edge $(i,j)$ is activated. In the rest of the analysis, for consistency with previous literatures as \cite{Zhu}\cite{Kashyap}, we discretize time instant $t$ according to $\{Z_k\}_{k\ge 0}$,  i.e., in terms of the total number of clock ticks.

In the rest of this section, we will describe the distributed quantized consensus algorithm \cite{Kashyap}. We are interested in the performance of this class of algorithms on arbitrary graphs.

\subsection{Quantized Consensus}
\label{sub:qc}
Without loss of generality, let us assume that all nodes hold integer values and the quantization is 1. Let $Q^{(i)}(t)$ denote the integer value of node $i$ at time $t$, with $Q^{(i)}(0)$ denoting the initial values. Define 
\begin{equation}
\label{qsum}
Q_{\rm{sum}} = \sum_{i  = 1} ^N Q^{(i)}(0).
\end{equation}
Let $Q_{\rm{sum}}$ be written as  $qN + r$, where $ 0 \le r < N$. Then the mean of the initial value in the network $\frac{1}{N} Q_{\rm{sum}} \in [q, q+1)$. Thus either $q$ or $q+1$ is an acceptable integer value for quantized average consensus (if the quantization level is 1). 

\begin{mydef}[Convergence on Quantized Consensus]
A quantized consensus reaches convergence at time $t$, if for any node $i$ on the graph,  $Q^{(i)}(t) \in \{q, q+1\}$.
\end{mydef}

There are a few properties that are desired for the quantized consensus algorithm: 
\begin{itemize}
\item \em{Sum conservation}:  
\begin{equation}
\sum_{i  = 1} ^N Q^{(i)}(t) = \sum_{i  = 1} ^N Q^{(i)}(t+1).
\end{equation}

\item {Variation non-increasing}:  \rm{if two nodes $i$, $j$ exchange information, }
\begin{equation}
|Q^{(i)}(t+1) - Q^{(j)}(t+1)| \le |Q^{(i)}(t) - Q^{(j)}(t)|.
\end{equation}
\end{itemize}

When two nodes $i$ and $j$ exchange information, without loss of generality, suppose that $Q^{(i)}(t)  \le Q^{(j)}(t)$. They follow the simple update rules below:
\begin{enumerate}
\item If $Q^{(j)}(t) - Q^{(i)}(t) \ge 2$, a \emph{non-trivial meeting} occurs:
$$
Q^{(i)}(t+1) = Q^{(i)}(t) + 1, Q^{(j)}(t+1) = Q^{(j)}(t) - 1.
$$
\item If $Q^{(j)}(t) - Q^{(i)}(t) \le 1$, a \emph{trivial meeting} occurs:

$$Q^{(i)}(t+1) = Q^{(j)}(t), Q^{(j)}(t+1) = Q^{(i)}(t).$$

\end{enumerate}

We can view this random process as a finite state Markov chain. Because the variation decreases whenever there is a non-trivial exchange,  convergence will be reached in finite time almost surely.  

\emph{Remark:} In this section, the update rules allow the node values to change by at most 1. This is relevant to load-balancing systems where only one value can be exchanged in the channel at a time due to the communication limit \cite{Kashyap}. Adjustments can be made for this class of quantized consensus algorithms, e.g. when two nodes exchange information, both nodes can update their value to the mean of the two. The analysis on the convergence time remains similar.

\section{Convergence Time Analysis}\label{main}
The main result of this work is the following theorem:
\begin{thm}
For a connected network of N nodes, an upper bound for the expected convergence time of the quantized consensus algorithm is $O(N^3\log(N))$.
\end{thm}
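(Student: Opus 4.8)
The plan is to reduce convergence time to the expected time needed to realize the sequence of \emph{non-trivial meetings} that the dynamics requires, and then to bound the time between consecutive non-trivial meetings by the hitting time of a suitable random walk on $\mathcal{G}$. First I would fix a Lyapunov function that is monotone under the update rules and certifies convergence. A natural choice is $\Phi(t)=\sum_{i=1}^{N}\bigl(Q^{(i)}(t)-q\bigr)\bigl(Q^{(i)}(t)-q-1\bigr)$, which is nonnegative, vanishes \emph{exactly} on the convergent set where every $Q^{(i)}\in\{q,q+1\}$, and is invariant under trivial meetings since it is a symmetric function of the node values. A direct computation gives that each non-trivial meeting decreases $\Phi$ by at least $2$, so $\Phi$ is non-increasing and the number of non-trivial meetings before convergence is controlled by $\Phi(0)$. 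This isolates the real difficulty: trivial meetings make no progress on $\Phi$, yet they are precisely the mechanism that transports extreme values across the graph until a non-trivial meeting becomes possible. Hence the convergence time is dominated by the waiting time for extreme values to be routed next to a node they can average with.

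To capture this transport I would model the location of the \emph{maximum} value (and symmetrically the minimum) as a token performing a walk on $\mathcal{G}$: a trivial meeting swaps the maximum to a neighbor, so the token hops along an edge. Because the token advances only on the clock ticks that select its node or a neighbor selecting it, the induced walk is \emph{lazy} relative to the global rate-$N$ clock; and because node $i$ picks each neighbor with probability $1/|\mathcal{N}_i|$, the transition probabilities are degree-dependent and therefore \emph{biased} rather than those of the natural walk assumed in \cite{Zhu}. I would set this biased lazy walk up as a Markov chain that is reversible with respect to a degree-weighted stationary measure, which is what licenses the electric-network machinery.

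The per-meeting time is then the hitting time of this walk to a node where a non-trivial meeting is available, and I would bound it through the commute-time identity $C(u,v)=2m\,R_{\mathrm{eff}}(u,v)$ relating commute times of a reversible walk to the effective resistance of the associated electric network \cite{Aldous}. Since hitting time is at most commute time, and since $m=|\mathcal{E}|=O(N^2)$ while $R_{\mathrm{eff}}(u,v)$ is at most the resistance along a shortest $u$--$v$ path, this yields an $O(N^3)$ bound; estimating the path resistance via the degrees of the nodes on the shortest path, rather than a crude diameter bound, sharpens the constant and, for structured topologies such as the complete graph, lets me substitute the exact effective resistance from algebraic graph theory \cite{Beineke,Godsil}. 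Transferring this estimate to the actual token dynamics requires a coupling between the token and the reversible walk, which also accounts for the laziness factor arising from the clock.

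Finally I would assemble the pieces: each non-trivial meeting costs $O(N^3)$ in expectation, and a coupling of the consensus chain across the $O(\log N)$ successive rounds in which the remaining displaced values are eliminated (equivalently, a coupon-collector estimate over the nodes still to be corrected) supplies the extra logarithmic factor, giving $O(N^3\log N)$ overall. I expect the main obstacle to be the random-walk reduction itself, since the token's motion is \emph{not} an autonomous Markov chain: several extreme values coexist and interact, non-trivial meetings reshape the value landscape, and a single swap that moves one token can simultaneously affect another. Making the coupling with a clean reversible walk rigorous despite these dependencies, and proving the hitting-time bound for the biased lazy walk rather than the natural walk, is the technical heart of the argument, and is exactly where the effective-resistance estimate along the shortest path does the work.
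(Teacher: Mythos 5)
Your proposal follows the same architecture as the paper---identify the value dynamics as a \emph{biased lazy} walk rather than the natural walk of \cite{Zhu}, bound the number of non-trivial meetings by a Lyapunov function (your $\Phi$ indeed drops by at least $2$ per non-trivial meeting and is swap-invariant, so that part is sound), bound transport times via the commute-time/effective-resistance identity, and pay a logarithmic factor for handling all pairs---but the step that carries the whole bound is wrong as stated. The identity $C(u,v)=2m\,R_{\mathrm{eff}}(u,v)$ with $m=|\mathcal{E}|$ and unit edge resistances is the formula for the \emph{simple} random walk, so your $O(N^2)\times O(N)$ arithmetic bounds the hitting time of a chain the token does not follow. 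For the biased lazy walk in clock-tick time, the correct weighted identity is $C(u,v)=w\,r'_{uv}$ with total weight $w=\sum_i w_i=N$ and edge resistances $r_{ij}=1/w_{ij}$, where $w_{ij}=\frac{1}{N}\left(\frac{1}{|\mathcal{N}_i|}+\frac{1}{|\mathcal{N}_j|}\right)$; these resistances can be as large as $\Theta(N^2)$, so the ``crude diameter bound'' on a shortest path gives only $N\cdot N\cdot N^2=O(N^4)$. The degree-sum estimate $\sum_k |\mathcal{N}_{q_k}|<3N$ along the shortest path is precisely what brings $r'_{uv}$ down to $O(N^2)$ and the commute time to $O(N^3)$: it saves a factor of $N$, not ``the constant,'' so you have the roles of the two estimates inverted. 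Nor can the gap be closed by coupling with the simple walk and ``accounting for the laziness factor'': the jump chain of the biased walk moves to neighbor $j$ with probability proportional to $\frac{1}{|\mathcal{N}_i|}+\frac{1}{|\mathcal{N}_j|}$ (not uniformly over neighbors), its holding probabilities are wildly heterogeneous (a star center moves at every tick, a leaf roughly once per $N$ ticks), and its stationary distribution is uniform, not degree-weighted as you assert---all signs that this chain is not a time change of the simple walk.

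There are two further gaps. First, you reduce the per-meeting cost to ``the hitting time of this walk to a node where a non-trivial meeting is available,'' treating the partner value as a fixed target; but both values move, and bounding a \emph{meeting} time by a hitting time is not automatic (for general chains it can fail outright). The paper devotes a full lemma to proving $\mathcal{M}_{\mathcal{X}}(\mathcal{G})\le 4\mathcal{H}_{P^B}(\mathcal{G})$, via an auxiliary coupled chain $\mathcal{X}'$ that merges walkers who would swap across an edge, together with a Coppersmith-style potential $\phi(x,y)=\mathcal{H}_{P^B}(x,y)+\mathcal{H}_{P^B}(y,t)-\mathcal{H}_{P^B}(t,y)$ built from a hidden vertex $t$; you flag this interaction as ``the technical heart'' but supply no mechanism for it. Second, your final assembly does not add up: the Lyapunov bound allows $\Theta(N)$ non-trivial meetings (for bounded initial range), so ``per-meeting cost $O(N^3)$ times $O(\log N)$ rounds'' is unjustified---nothing in your argument shows only $O(\log N)$ rounds are needed, and a sequential accounting gives $O(N^4)$. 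The logarithmic factor in the paper comes from a different mechanism: a period/union-bound argument (split time into periods of length $k\mathcal{M}_{\mathcal{X}}(\mathcal{G})$, apply Markov's inequality per period and per pair) showing that one walker meets \emph{all} others, and indeed all required meetings occur in parallel, within $O(\mathcal{M}_{\mathcal{X}}(\mathcal{G})\log N)$ time. That parallelization lemma is the piece your coupon-collector intuition would need to be made precise.
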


We use the analogy of electric networks and random walks to derive the upper bound. Before deriving the bound on the convergence time, we first provide some definitions and notation that we will use and prove some useful lemmas in   Section \ref{def} and Section \ref{mt}. 

\subsection{Definition and Notation}\label{def}
\begin{mydef}[Hitting Time]
For a graph $\mathcal{G}$ and a specific random walk $X$, and $i, j \in \mathcal{V}$, let $\mathcal{H}{(i,j)}$ denote the expected number of steps a random walk beginning at $i$ must take before reaching $j$, i.e., $\mathcal{H}(i,j) = \mathbf{E}\left[\min\{t: X_t = j \}| X_0 = i \right]$ . Define the ``hitting time'' of $\mathcal{G}$ by $\mathcal{H(G)} = \max_{i,j}\mathcal{H}(i,j)$.
\end{mydef}

\begin{mydef}[Meeting Time]

Consider two random walkers $X,Y$ placed on $\mathcal{G}$, and $i, j \in \mathcal{V}$. At each tick of the clock, they move according to some joint probability distribution. Let $\mathcal{M}{(i,j)}$ denote the expected time for the two walkers starting from $i$ and $j$ respectively to meet at the same node or to cross each other through the same edge (if they move at the same time), i.e.$\mathcal{M}(i,j) = \mathbf{E}\left[\min\{t: X_t = Y_t \textrm{ or } X_t = Y_{t-1},Y_t=X_{t-1} \}| X_0 = i , Y_0 = j\right]$. Define the ``meeting time" of $\mathcal{G}$ by $\mathcal{M(G)} = \max_{i,j}\mathcal{M}(i,j)$.
\end{mydef}

Define a \emph{simple random walk} on $\mathcal{G}$, with transition matrix $P^{S} = (P_{ij})$ as follows:
\begin{itemize}
\item $P^S_{ii}: =0$ for $\forall i \in \mathcal{V}$,
\item $P^S_{ij}: =\frac{1}{|\mathcal{N}_i|}$ for $(i, j) \in \mathcal{E}$.  
\end{itemize}
$\mathcal{N}_i$ is the set of neighbors of node $i$ and $|\mathcal{N}_i|$ is the degree of node $i$.

Define a \emph{natural random walk}  with transition matrix $P^{N} = (P_{ij})$ as follows:
\begin{itemize}
\item $P^N_{ii}=1-\frac{1}{N}$ for $\forall i \in \mathcal{V}$,
\item $P^N_{ij}=\frac{1}{N|\mathcal{N}_i|}$ for $(i, j) \in \mathcal{E}$.
\end{itemize}

Define a \emph{biased random walk}  with transition matrix $P^B = (P_{ij})$ as follows:
\begin{itemize}
\item $P^B_{ii}: =1-\frac{1}{N} - \sum_{k\in \mathcal{N}_i}\frac{1}{N|\mathcal{N}_k|}$ for $\forall i \in \mathcal{V}$,
\item $P^B_{ij}: =\frac{1}{N}\left(\frac{1}{|\mathcal{N}_i|} + \frac{1}{|\mathcal{N}_j|}\right)$ for $(i, j) \in \mathcal{E}$.
\end{itemize}

\subsection{Hitting Time and Meeting Time on Weighted Graph}\label{mt}

In this class of algorithms, we label the initial observations(states or values) by the nodes as $\alpha_1, \alpha_2, ..., \alpha_N$. A random walk is a Markov process with random variables $A_1, A_2,..., A_t,...$ such that the next state only depends on the current state. In the system setting, when the node $i$'s clock ticks, $i$ randomly choose one of its neighbor node $j$ from the set $\mathcal{N}_i$ to exchange information. We notice that before any two observations $\alpha_m, \alpha_n$ meet each other,  they take random walks on the graph $\mathcal{G}$. Their marginal transition matrices are both $P^B$. It may be tempting to think that they are taking the \emph{natural random walks} as stated in \cite{Zhu}. Upon closer inspection, we find that there are two sources stimulating the random walk from $i$ to $j$, for all $(i,j)\in \mathcal{E}$: one is active, initiated by node $i$'s clock, which leads to $P^1_{ij} = P^N_{ij}$; the other one is passive, initiated by $i$'s neighbor $j$, which leads to $P^2_{ij} = P^N_{ji}$. Thus $P_{ij} = P^1_{ij} + P^2_{ij}= P^B_{ij}$; i.e., the transition matrix is actually $P^B$ instead of $P^N$. Because of the system settings, two random walks $\alpha_m, \alpha_n$ can only move at the same time if they are adjacent. Denote this joint random process as $\mathcal{X}$. Suppose $\alpha_m$ is at node $x$, and $\alpha_n$ is at node $y$.

For $x \notin \mathcal{N}_y$,  and  $i \in \mathcal{N}_x$, we have
\begin{eqnarray}\nonumber 
&& P_{\mathcal{X}\textrm{joint}}(\alpha_m\textrm{   moves from } x \textrm{ to } i,\textrm{ } \alpha_n\textrm{ does not move}) \\ \nonumber &= &
P^B_{xi} - P_{\mathcal{X}\textrm{joint}}(\alpha_m\textrm{ moves from } x \textrm{ to } i\textrm{, } \alpha_n\textrm{ moves}) \\ 
& = &P^B_{xi}.
\label{bayes}
\end{eqnarray}
Similar for $P_{\mathcal{X}\textrm{joint}}(\alpha_n\textrm{   moves from } y \textrm{ to } j ,\textrm{ } \alpha_m\textrm{ does not move})$, where $j \in \mathcal{N}_y$. Also, 

\begin{eqnarray} \nonumber
&& P_{\mathcal{X}\textrm{joint}}(\alpha_m\textrm{ does not move, } \alpha_n\textrm{ does not move}) \\
&=& 1 - \sum_{i\in \mathcal{N}_x}P^B_{xi} - \sum_{j\in \mathcal{N}_y}P^B_{yj}.
\end{eqnarray}

For $x \in \mathcal{N}_y$ and $i \neq y$ we have,  
\begin{eqnarray}
\nonumber 
&& P_{\mathcal{X}\textrm{joint}}(\alpha_m\textrm{   moves from } x \textrm{ to } i,\textrm{ } \alpha_n\textrm{ does not move}) \\ \nonumber &= &
P^B_{xi} - P_{\mathcal{X}\textrm{joint}}(\alpha_m\textrm{ moves from } x \textrm{ to } i\textrm{, } \alpha_n\textrm{ moves})  \\
 &=& P^B_{xi}.
\end{eqnarray}
\begin{eqnarray} 
 &&P_{\mathcal{X}\textrm{joint}}(\alpha_m\textrm{ moves to }y, \alpha_n \textrm{ moves to }x) = P^B_{xy}.
 \label{meet}
\end{eqnarray}

\begin{eqnarray}\nonumber 
&&P_{\mathcal{X}\textrm{joint}}(\alpha_m\textrm{ does not move, } \alpha_n\textrm{ does not move}) \\
&=& 1 - \sum_{i\in \mathcal{N}_x}P^B_{xi} - \sum_{j\in \mathcal{N}_y}P^B_{yj} + P^B_{xy}.
\label{stay}
\end{eqnarray}

\begin{lem}
The biased random walk $\mathcal{X}$ is a reversible Markov process.
\end{lem}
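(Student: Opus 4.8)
The plan is to verify the \emph{detailed balance} (reversibility) condition directly from the definition of $P^B$. Recall that a Markov chain with transition matrix $P$ on state space $\mathcal{V}$ is reversible precisely when there is a stationary distribution $\pi$ with $\pi_i P_{ij} = \pi_j P_{ji}$ for every pair $i,j \in \mathcal{V}$. The walk $\mathcal{X}$ driving each observation $\alpha_m$ has marginal transition matrix $P^B$, so it suffices to exhibit such a $\pi$.

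First I would record the single structural observation that makes everything work: the off-diagonal entries $P^B_{ij} = \frac{1}{N}\left(\frac{1}{|\mathcal{N}_i|}+\frac{1}{|\mathcal{N}_j|}\right)$ are invariant under the exchange $i \leftrightarrow j$, i.e. $P^B_{ij} = P^B_{ji}$ for all $(i,j)\in\mathcal{E}$ (and trivially on the diagonal). Hence $P^B$ is a symmetric matrix. This is exactly what one expects from the decomposition $P^B_{ij} = P^N_{ij} + P^N_{ji}$ noted before the lemma, since adding the active and passive contributions symmetrizes the transition rates.

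Next I would check that $P^B$ is genuinely stochastic, so that the uniform distribution $\pi_i = \frac{1}{N}$ is a legitimate stationary distribution. Summing a row, the neighbor contributions give $\sum_{j\in\mathcal{N}_i} \frac{1}{N|\mathcal{N}_i|} = \frac{1}{N}$ and $\sum_{j\in\mathcal{N}_i}\frac{1}{N|\mathcal{N}_j|} = \sum_{k\in\mathcal{N}_i}\frac{1}{N|\mathcal{N}_k|}$, which exactly cancel the corresponding terms subtracted inside $P^B_{ii}$, leaving $\sum_j P^B_{ij} = 1$. Combined with symmetry this shows $P^B$ is doubly stochastic, so $\pi_i = \frac{1}{N}$ is indeed stationary. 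Detailed balance is then immediate: $\pi_i P^B_{ij} = \frac{1}{N}P^B_{ij} = \frac{1}{N}P^B_{ji} = \pi_j P^B_{ji}$, and the walk is reversible. Equivalently, I would realize it as a random walk on a weighted graph with the symmetric conductances $c(i,j) = \frac{1}{N^2}\left(\frac{1}{|\mathcal{N}_i|}+\frac{1}{|\mathcal{N}_j|}\right)$, which is the representation to be exploited later for the electric-network bounds on hitting and commute times.

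There is no serious obstacle here; the entire content lies in spotting that $P^B$ is symmetric. The only point needing a little care is the bookkeeping in the row-sum computation, where the passive term $\sum_{k\in\mathcal{N}_i}\frac{1}{N|\mathcal{N}_k|}$ in $P^B_{ii}$ must be matched against the $\frac{1}{|\mathcal{N}_j|}$ half of the off-diagonal transitions. Once this cancellation is confirmed, stochasticity and hence stationarity of the uniform measure follow, and reversibility is a one-line consequence of the symmetry of $P^B$.
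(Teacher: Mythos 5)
Your proof is correct and follows essentially the same route as the paper: observe that $P^B$ is symmetric, that the uniform distribution $\pi_i = \frac{1}{N}$ is stationary, and conclude detailed balance $\pi_i P^B_{ij} = \pi_j P^B_{ji}$ immediately. The only difference is that you explicitly carry out the row-sum cancellation that the paper dismisses as ``easy to verify,'' which is a welcome addition but not a different argument.
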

\begin{proof}
Let $\pi$ be the stationary distribution of the biased random walk $\mathcal{X}$. It is easy to verify that
\begin{equation}
\pi_i = \frac{1}{N}
\end{equation}
 for all $i\in \mathcal{V}$.
Thus by the symmetry of $P^B$, $$\pi_iP^B_{ij} = \pi_jP^B_{ji}.$$
\end{proof}

\begin{lem} \label{lem:hitting}
In an arbitrary connected graph $\mathcal{G}$ with $N$ nodes, the hitting time of the biased random walk $\mathcal{X}$ satisfies $$\mathcal{H}_{P^B}(\mathcal{G}) < 3{N^3}.$$
\end{lem}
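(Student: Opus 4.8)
\emph{Proof proposal.} The plan is to realize $\mathcal{X}$ as a random walk on a resistor network and invoke the commute-time identity. Lemma~1 already tells us that $\mathcal{X}$ is reversible with $\pi_i=1/N$, so I would assign to each edge $(i,j)\in\mathcal{E}$ the conductance $c_{ij}:=\tfrac{1}{|\mathcal{N}_i|}+\tfrac{1}{|\mathcal{N}_j|}$ and choose the self-loop conductance at each node to match $P^B_{ii}$. A direct check gives $\sum_{j}c_{ij}=N$ for every $i$ (the incident edges contribute $1+\sum_{j\in\mathcal{N}_i}\tfrac{1}{|\mathcal{N}_j|}$ and the self-loop supplies the rest), so the total conductance is $c_{\mathcal{G}}=N^2$ and the induced walk is exactly $P^B$. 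The theory of electric networks \cite{Aldous} then yields the commute-time identity
\begin{equation}
\mathcal{H}(a,b)+\mathcal{H}(b,a)=c_{\mathcal{G}}\,R_{\mathrm{eff}}(a,b)=N^2\,R_{\mathrm{eff}}(a,b),
\end{equation}
where $R_{\mathrm{eff}}(a,b)$ is the effective resistance with edge resistances $r_{ij}=1/c_{ij}$ (self-loops carry no current and are irrelevant). Since $\mathcal{H}(a,b)\le\mathcal{H}(a,b)+\mathcal{H}(b,a)$, it suffices to show $R_{\mathrm{eff}}(a,b)=O(N)$ uniformly in $a,b$.

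Next I would bound the effective resistance by the series resistance along a single path, taking a \emph{shortest} path $a=v_0,v_1,\dots,v_\ell=b$. Because $r_{ij}=\tfrac{|\mathcal{N}_i||\mathcal{N}_j|}{|\mathcal{N}_i|+|\mathcal{N}_j|}\le\tfrac{1}{4}\big(|\mathcal{N}_i|+|\mathcal{N}_j|\big)$ by the AM--GM inequality, summing along the path gives
\begin{equation}
R_{\mathrm{eff}}(a,b)\le\sum_{k=0}^{\ell-1}r_{v_k v_{k+1}}\le\tfrac{1}{2}\sum_{k=0}^{\ell}|\mathcal{N}_{v_k}|.
\end{equation}

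The crux is to show that the sum of degrees along a shortest path is only $O(N)$, not the trivial $O(N^2)$. Here I would exploit the defining property of a shortest path: no vertex $u\in\mathcal{V}$ can be adjacent to two path vertices $v_i,v_j$ with $|i-j|\ge3$, since otherwise $v_0,\dots,v_i,u,v_j,\dots,v_\ell$ would be a strictly shorter $a$--$b$ route. Hence each $u$ is adjacent to at most three (necessarily consecutive) path vertices, and double counting gives $\sum_{k=0}^{\ell}|\mathcal{N}_{v_k}|=\sum_{u\in\mathcal{V}}|\{k:u\sim v_k\}|\le3N$. Combining the three displays, $R_{\mathrm{eff}}(a,b)\le\tfrac{3N}{2}$ and $\mathcal{H}(a,b)\le N^2\cdot\tfrac{3N}{2}=\tfrac{3}{2}N^3<3N^3$; maximizing over $a,b$ gives $\mathcal{H}_{P^B}(\mathcal{G})<3N^3$.

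I expect the main obstacle to be this last combinatorial step. A naive estimate allows edge resistances as large as $\Theta(N)$ on a path of length $\Theta(N)$, producing $R_{\mathrm{eff}}=\Theta(N^2)$ and only the weaker $O(N^4)$ hitting time; the entire improvement rests on the observation that, along a shortest path, each vertex of the graph contributes to at most a bounded number of the degrees $|\mathcal{N}_{v_k}|$, collapsing their sum to $O(N)$. Pinning down the adjacency-counting constant (at most $3$) and confirming that the self-loops do not corrupt the commute-time identity are the two places where I would be most careful.
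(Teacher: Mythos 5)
Your proposal is correct and follows essentially the same route as the paper's proof: both realize $P^B$ as a random walk on a resistor network (yours scaled by a factor of $N$, which is immaterial), apply the Aldous--Fill commute-time identity, bound the effective resistance by the series resistance along a shortest path, and use the identical adjacency-counting argument to get the $\sum_k |\mathcal{N}_{v_k}| \le 3N$ degree-sum bound. The only cosmetic differences are your AM--GM bound $r_{ij}\le\tfrac14(|\mathcal{N}_i|+|\mathcal{N}_j|)$ in place of the paper's $r_{ij}< N\min(|\mathcal{N}_i|,|\mathcal{N}_j|)$, and your direct single-path (Rayleigh) bound in place of the paper's triangle inequality for effective resistance combined with $r'_{ij}\le r_{ij}$, which even yields the slightly better constant $\tfrac32 N^3$.
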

\begin{proof}
The biased random walk $\mathcal{X}$ defined above is a random walk on a weighted graph with weight
\begin{equation}\label{weight}
w_{ij}: = \frac{1}{N}\left(\frac{1}{|\mathcal{N}_i|} + \frac{1}{|\mathcal{N}_j|}\right) \textrm{ for } (i, j) \in \mathcal{E}.
\end{equation}
\begin{equation}
w_{ii} : = 1 - \sum_{j \in \mathcal{N}_i}w_{ij}.
\end{equation} 
\begin{equation}
w_i = \sum_{j\in \mathcal{V}}w_{ij} = 1, \; w = \sum_i{w_i} = N. 
\end{equation}

\begin{figure}[!t]
\centering
\centerline{\includegraphics[width=.4\textwidth]{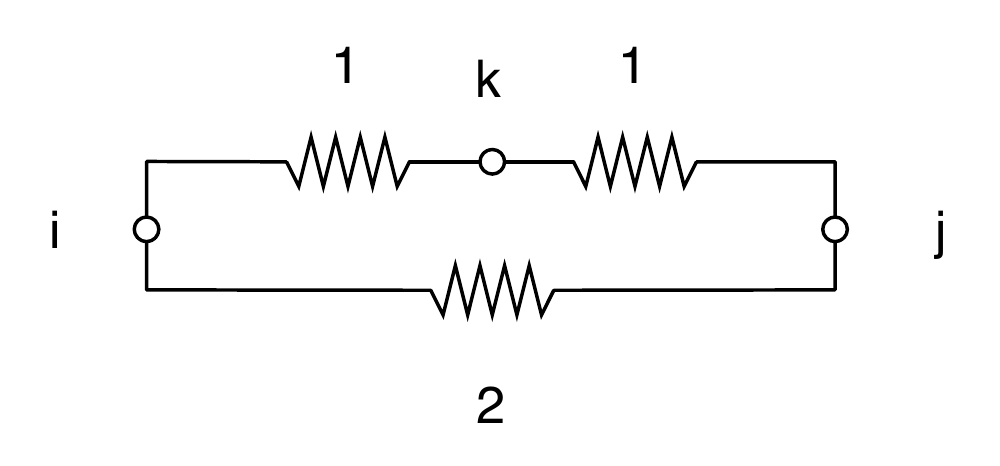}}
\caption{{\em A simple example of effective resistance.}}
\label{resistance}
\end{figure}

It is well-known that there is an analogy between a weighted graph and an electric network \cite{Aldous}. Let $r_{ij}$ denote the resistance between to adjacent nodes, i.e. an edge $(i, j) \in \mathcal{E}$, and let $r'_{xy} $ denote the effective resistance between any two nodes $x, y$. For example, in Fig. \ref{resistance}, $r_{ij} = 2$, $r_{ik} = 1$, $r'_{ij} = 1$ and $r'_{ik} = 0.75$.   In an electric circuit, we always have $r'_{ij} \le r_{ij}$ because of Ohm's Law. For a random walk on a weighted graph, a wire linking $i$ and $j$ has conductance $w_{ij}$, i.e., resistance $r_{ij} = 1/w_{ij}$. And the commuting time between $x$ and $y$, $\mathcal{H}_{P^B}(x, y) + \mathcal{H}_{P^B}(y, x)$, has the following relationship with the effective resistance $r'_{xy}$:
\begin{equation}\label{commute}
\mathcal{H}_{P^B}(x, y) + \mathcal{H}_{P^B}(y, x) = wr'_{xy},
\end{equation}
where $r'_{xy}$ is the effective resistance in the electric network between node $x$ and node $y$ (Chapter 3 Corollary 11 in \cite{Aldous}). Since the degree of any node is at most $N-1$, for $(i, j) \in \mathcal{E}$, 
\begin{equation}
w_{ij} = \frac{1}{N}\left(\frac{1}{|\mathcal{N}_i|} + \frac{1}{|\mathcal{N}_j|}\right) 
  >  \frac{1}{N}\frac{1}{\min(|\mathcal{N}_i|, |\mathcal{N}_j|)} 
\end{equation}

\begin{equation}
\label{h1}
r_{ij} < N\times \min(|\mathcal{N}_i|, |\mathcal{N}_j|)
\end{equation}

Consequently, $r'_{ij} \le r_{ij} <N\times \min(|\mathcal{N}_i|, |\mathcal{N}_j|)$. 

For all $x, y \in \mathcal{V}$, let $Q = ( q_1 = x, q_2, q_3, ..., q_{l-1}, q_l = y)$ be the shortest path on the graph connecting $x$ and $y$ . Now we claim that 
$\sum^l _{k = 1}|\mathcal{N}_{q_k}| < 3N$ (from the proof of Theorem 2 in \cite{ikeda2009hitting}).

Since any node not lying on the shortest path can only be adjacent to at most three vertices on $Q$, we have
\begin{equation}
\label{h2}
\sum^l _{k = 1}|\mathcal{N}_{q_k}| \le 2l + 3(N - l) < 3N.
\end{equation}
The first term $2l$ in Equation (\ref{h2}) is due to the fact that nodes on the shortest path connect to one another, resulting in total degree about $2l$ ($2(l-1)$ to be precise). The second term $3(N-l)$ is due to the fact that any node, say $u$, not lying on the shortest path can only be adjacent to at most three vertices on the shortest path, say $q_{i1}, q_{i2}, q_{i3}$ on $Q$. Suppose $u$ is also adjacent to $q_{i4}$, which is also on $Q$. Without loss of generality, let $i_1 < i_2 < i_3 < i_4$. Then the path $q_1, ..., q_{i1},u,q_{i4},...,q_l$ is shorter than the path $Q=(q_1,...,q_{i1},...,q_{i2},...,q_{i3},...,q_{i4},...,q_l)$, contradict with the fact $u$ is not on shortest path. Thus the $(N-l)$ points not on the shortest path contributing at most $3(N-l)$ degrees for the total degrees of nodes on the shortest path. 

The effective resistance between any two nodes $x$ and $y$ is less than or equal to the sum of the effective resistance $r'_{q_kq_{k+1}}$ on the shortest path.  This is due to the triangle inequality for effective resistance on undirected graphs (Theorem B in \cite{klein1993resistance}). By (\ref{h1}) and (\ref{h2}), we have 
\begin{eqnarray}\nonumber
r'_{xy} &\le& \sum_{k = 1}^{l-1} r'_{q_kq_{k+1}}
\le \sum_{k = 1}^{l-1} N\times \min(|\mathcal{N}_{q_k}|, |\mathcal{N}_{q_k+1}|) \\
&\le& N \times \sum^l _{k = 1}|\mathcal{N}_{q_k}|  
< 3N^2
\end{eqnarray}

By (\ref{commute}), we have 
\begin{eqnarray}\nonumber 
\mathcal{H}_{P^B}(x,y) &<& \mathcal{H}_{P^B}(x, y) + \mathcal{H}_{P^B}(y, x) \\ \nonumber
&=& wr'_{xy}  \\ \nonumber
&<& N \times 3N^2  \\
&=& 3N^3.
\end{eqnarray}
This completes the proof.
\end{proof}
Note that this is an upper bound for arbitrary connected graphs. A tighter bound can be derived for certain network topologies. 

\begin{lem}\label{circle}
$\mathcal{H}_{P^B}(x, y) + \mathcal{H}_{P^B}(y, z) + \mathcal{H}_{P^B}(z, x) = \mathcal{H}_{P^B}(x, z) + \mathcal{H}_{P^B}(z, y) + \mathcal{H}_{P^B}(y, x). $
\end{lem}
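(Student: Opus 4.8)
The plan is to recognize that the stated equality is purely a statement about the \emph{antisymmetric} part of the hitting time. Moving everything to one side and pairing the two directions of each leg, the claim is equivalent to
\[
[\mathcal{H}_{P^B}(x,y) - \mathcal{H}_{P^B}(y,x)] + [\mathcal{H}_{P^B}(y,z) - \mathcal{H}_{P^B}(z,y)] + [\mathcal{H}_{P^B}(z,x) - \mathcal{H}_{P^B}(x,z)] = 0.
\]
Writing $D(a,b) := \mathcal{H}_{P^B}(a,b) - \mathcal{H}_{P^B}(b,a)$, the lemma asserts $D(x,y)+D(y,z)+D(z,x)=0$ for every triple. Note that the commute-time identity (\ref{commute}) only pins down the \emph{symmetric} combination $\mathcal{H}_{P^B}(a,b)+\mathcal{H}_{P^B}(b,a)=w r'_{ab}$, so it cannot by itself yield the result; the content of the lemma lies entirely in the antisymmetric part $D$. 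First I would therefore aim to show that $D$ is the gradient of a potential on the vertices, i.e.\ $D(a,b)=\phi(b)-\phi(a)$ for some $\phi:\mathcal{V}\to\mathbb{R}$, after which the three terms telescope and the sum vanishes automatically.

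Second, to exhibit such a $\phi$ I would stay inside the reversible electric-network picture already in play. Tetali's hitting-time formula (a standard, citable fact) expresses the hitting time through effective resistances as $\mathcal{H}_{P^B}(a,b)=\tfrac12\big(w\,r'_{ab}+\sum_{v\in\mathcal V}\pi_v\,w(r'_{bv}-r'_{av})\big)$; since here $\pi_v=1/N$ and $w=N$, subtracting the reverse direction cancels the symmetric $w\,r'_{ab}$ term and leaves $D(a,b)=\sum_{v\in\mathcal V}(r'_{bv}-r'_{av})$. Setting $\phi(a):=\sum_{v\in\mathcal V} r'_{av}$ gives $D(a,b)=\phi(b)-\phi(a)$, so the cyclic sum telescopes to zero and the lemma follows. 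Equivalently, and more in the spirit of the reversibility lemma just proved, I would run a time-reversal argument: the left-hand side is the expected duration of a random tour that starts at $x$, first reaches $y$, then $z$, then returns to $x$; reversing sample paths maps each such tour to one visiting $z$ before $y$ in the time-reversed chain $\widehat{P^B}$ with equal weight, and because $\mathcal{X}$ is reversible we have $\widehat{P^B}=P^B$, so the reversed expected tour length equals the right-hand side.

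The main obstacle is making the tour-reversal rigorous: the three legs are defined by \emph{first-passage} times, and one must verify that path reversal carries first-passage tours to first-passage tours while preserving both length and probability, which reduces at each step to the detailed-balance relation $\pi_i P^B_{ij}=\pi_j P^B_{ji}$. To sidestep this bookkeeping I would fall back on a fully algebraic version that exploits the special structure here: because $\pi$ is uniform, reversibility is exactly the symmetry of $P^B$, so the fundamental matrix $Z=(I-P^B+\tfrac1N\mathbf{1}\mathbf{1}^{\top})^{-1}$ is symmetric, $Z_{ij}=Z_{ji}$. Substituting the standard identity $\mathcal{H}_{P^B}(i,j)=N(Z_{jj}-Z_{ij})$ into both sides, the diagonal entries $Z_{xx},Z_{yy},Z_{zz}$ occur on each side and cancel, while the off-diagonal entries cancel in pairs by symmetry, giving the identity directly. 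This last route is the safest to write out, as it needs nothing beyond the uniform stationary distribution and the symmetry of $P^B$ already observed in the reversibility lemma.
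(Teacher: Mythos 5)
Your proof is correct, but it takes a genuinely different route from the paper's. The paper disposes of this lemma in one line: having established reversibility of $\mathcal{X}$ (its Lemma 1), it simply cites the cyclic tour property of reversible chains (Lemma 2, Chapter 3 of Aldous--Fill), which is exactly the statement that the expected tour $x\to y\to z\to x$ has the same length as the reversed tour $x\to z\to y\to x$. Your second route (sample-path reversal of tours, using $\widehat{P^B}=P^B$) is in substance the proof of that cited lemma, so there you coincide with the paper; but the two arguments you actually commit to are self-contained algebraic proofs: (i) via Tetali's formula, showing that the antisymmetric part $D(a,b)=\mathcal{H}_{P^B}(a,b)-\mathcal{H}_{P^B}(b,a)$ is the gradient of the potential $\phi(a)=\sum_{v}r'_{av}$, so the cyclic sum telescopes; and (ii) via the fundamental matrix, where the uniform stationary distribution makes reversibility equivalent to symmetry of $P^B$, hence $Z$ symmetric, and substituting $\mathcal{H}_{P^B}(i,j)=N(Z_{jj}-Z_{ij})$ leaves diagonal entries common to both sides while the off-diagonal entries cancel in symmetric pairs. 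Both check out; in (ii) the convention ambiguity in defining $Z$ is harmless because the hitting-time formula depends only on the differences $Z_{jj}-Z_{ij}$, which are unchanged by adding $\tfrac{1}{N}\mathbf{1}\mathbf{1}^{\top}$. What the paper's citation buys is brevity and generality --- the tour identity needs only reversibility, not uniform $\pi$ or symmetry of the transition matrix; what your algebra buys is a self-contained verification that exploits the special structure of $P^B$ and avoids the first-passage bookkeeping you rightly flag as the delicate point of the reversal argument. Your opening observation is also worth keeping: the commute-time identity (\ref{commute}) constrains only the symmetric part of $\mathcal{H}_{P^B}$, so it cannot by itself imply the lemma.
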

\begin{proof}
This is direct result from Lemma 2 in Chap 3 of Aldous-Fill's book \cite{Aldous} since $\mathcal{X}$ is reversible. 
\end{proof}

\begin{mydef}[Hidden Vertex]
A vertex $t$ in a graph is said to be hidden if for every other point in the graph,  $\mathcal{H}(t, v) \le\mathcal{H}(v, t)$. A hidden vertex is shown to exist for all reversible Markov chains in Lemma 3 in \cite{Coppersmith}.
\end{mydef} 

\begin{lem}\label{lem:meeting}
The meeting time of any two random walders on the network $\mathcal{G}$ following the random processes $\mathcal{X}$ in Section \ref{mt} is less than $4\mathcal{H}_{P^B}{(\mathcal{G})}$.
\end{lem}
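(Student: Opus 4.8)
The plan is to reduce the meeting-time bound to the hitting-time bound of Lemma \ref{lem:hitting} by exploiting two structural features of the joint process $\mathcal{X}$. First I would record that before the two walkers meet, at most one of them moves at any clock tick: the only transition in which both $\alpha_m$ and $\alpha_n$ change position is the crossing event (\ref{meet}), and that event is itself a meeting. Consequently, inspecting (\ref{bayes})--(\ref{stay}), the marginal law of each walker at every clock tick is exactly $P^B$. This means that single-walker hitting times and the meeting time of the pair are expressed in the same time units, so Lemma \ref{lem:hitting} can be applied to each walker separately.

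Next I would bring in the hidden vertex. By Lemma 1 the walk $\mathcal{X}$ is reversible, so a hidden vertex $t$ exists, satisfying $\mathcal{H}_{P^B}(t,v)\le\mathcal{H}_{P^B}(v,t)$ for every $v$. The main step is a two-phase accounting followed by a self-referential inequality. In Phase~1, run $\mathcal{X}$ until walker $X$ first visits $t$; since the marginal of $X$ is $P^B$, the expected duration is at most $\mathcal{H}_{P^B}(\mathcal{G})$. In Phase~2, continue until walker $Y$ also visits $t$, costing at most another $\mathcal{H}_{P^B}(\mathcal{G})$ in expectation, while recalling that if at any instant the two trajectories collide or cross, they have already met. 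So the two phases together take expected time at most $2\mathcal{H}_{P^B}(\mathcal{G})$.

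I would then argue that by the end of Phase~2 the walkers have met with probability at least $1/2$. The clean way to see this is again through reversibility: by Lemma 1 the time-reversed trajectory of $Y$ on its way into $t$ has the same law as a $P^B$-walk issued from $t$, so Phase~2 compares two walks emanating from the common hidden vertex $t$, and the hidden-vertex inequality together with the cyclic identity of Lemma \ref{circle} forces their paths to cross with probability bounded away from zero. Granting this, the Markov property gives, for every starting pair $(i,j)$,
\begin{equation}
\mathcal{M}(i,j)\le 2\mathcal{H}_{P^B}(\mathcal{G})+\tfrac{1}{2}\,\mathcal{M}(\mathcal{G}),
\end{equation}
since with probability at most $1/2$ one must restart from the current (worst-case) configuration, whose residual meeting time is at most $\mathcal{M}(\mathcal{G})$ by definition. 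Taking the maximum over $(i,j)$ and rearranging yields $\mathcal{M}(\mathcal{G})\le 4\mathcal{H}_{P^B}(\mathcal{G})$, as claimed.

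The step I expect to be the main obstacle is the constant lower bound on the probability of meeting during the two phases, i.e.\ the endgame in which both walkers have reached the neighborhood of $t$ but keep missing each other. Examining the potential $h(x,y)=\mathcal{H}_{P^B}(x,t)+\mathcal{H}_{P^B}(y,t)$ shows that the pair is driven toward $t$ but not toward the diagonal, so the hidden-vertex property (and not merely the hitting-time estimate) is precisely what is needed to convert ``both near $t$'' into an actual collision with probability bounded away from zero. The remaining phases are routine consequences of the marginals being $P^B$ and of the commute-time estimate already established in Lemma \ref{lem:hitting}.
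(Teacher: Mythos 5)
Your proposal has a genuine gap, and it sits exactly where you yourself flagged it: the claim that after the two phases (run until $X$ hits the hidden vertex $t$, then until $Y$ hits $t$) the walkers have met with probability at least $1/2$. Nothing in the proposal proves this, and it is not a routine consequence of reversibility or of Lemma \ref{circle}. During Phase 2 the walker $X$ does not wait at $t$; it keeps moving, and on a general graph $Y$ can reach $t$ while avoiding $X$ entirely (for instance by taking the other route around a cycle), since in the process $\mathcal{X}$ the walkers can only meet by colliding at a node or crossing a common edge while adjacent. There is no topological obstruction forcing a crossing, so the per-round meeting probability is not obviously bounded below by any universal constant, let alone by $1/2$ --- and $1/2$ is precisely what your recursion $\mathcal{M}(i,j)\le 2\mathcal{H}_{P^B}(\mathcal{G})+\tfrac{1}{2}\mathcal{M}(\mathcal{G})$ needs in order to produce the constant $4$. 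The hidden-vertex property only compares hitting times to and from $t$; it says nothing about two trajectories intersecting, and the appeal to ``time-reversal of $Y$'s trajectory'' does not supply the missing collision argument.

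The paper closes this gap by a different mechanism. It introduces an auxiliary coupled chain $\mathcal{X}'$, identical to $\mathcal{X}$ except that the crossing transition (\ref{meet}) is replaced by a guaranteed meeting with doubled probability $2P^B_{xy}$, as in (\ref{meet2})--(\ref{stay2}). For $\mathcal{X}'$ it proves $\mathcal{M}_{\mathcal{X}'}(\mathcal{G})<2\mathcal{H}_{P^B}(\mathcal{G})$ not by a two-phase hitting argument but by a maximum-principle argument on the Coppersmith-style potential $\phi(x,y)=\mathcal{H}_{P^B}(x,y)+\mathcal{H}_{P^B}(y,t)-\mathcal{H}_{P^B}(t,y)$: both $\phi$ and $\mathcal{M}_{\mathcal{X}'}$ satisfy one-step recursions, the one for $\mathcal{M}_{\mathcal{X}'}$ being a strict inequality, so a pair maximizing $\mathcal{M}_{\mathcal{X}'}-\phi$ yields a contradiction, giving $\mathcal{M}_{\mathcal{X}'}\le\phi<2\mathcal{H}_{P^B}(\mathcal{G})$. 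The hidden vertex enters only to make $\phi$ symmetric (via Lemma \ref{circle}) and bounded by $2\mathcal{H}_{P^B}(\mathcal{G})$. The factor $2$ relating $\mathcal{X}$ to $\mathcal{X}'$ then comes from the coupling itself: the chains agree until the walkers are adjacent, and each $\mathcal{X}'$-meeting corresponds to a true crossing in $\mathcal{X}$ with probability exactly $1/2$, independently, whence $\mathcal{M}_{\mathcal{X}}(\mathcal{G})\le 2\mathcal{M}_{\mathcal{X}'}(\mathcal{G})<4\mathcal{H}_{P^B}(\mathcal{G})$. If you want to salvage your outline, the potential-function maximum principle is the step you must import; the two-phase hitting argument cannot substitute for it.
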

\begin{proof}
In order to prove the lemma, we construct a coupled Markov chain, $\mathcal{X}'$ to assist the analysis. $\mathcal{X}'$ has the same joint distribution as $\mathcal{X}$  except (\ref{meet}) and (\ref{stay}).
\begin{eqnarray} 
 P_{\mathcal{X}'\textrm{joint}}(\alpha_m\textrm{, }\alpha_n \textrm{ meet at }x \textrm{ or }y) = 2P^B_{xy}
 \label{meet2}
\end{eqnarray}
\begin{eqnarray} \nonumber
&& P_{\mathcal{X}'\textrm{joint}}(\alpha_m\textrm{ does not move, } \alpha_n\textrm{ does not move}) \\
&=& 1 - \sum_{i\in \mathcal{N}_x}P^B_{xi} - \sum_{j\in \mathcal{N}_y}P^B_{yj}.
\label{stay2}
\end{eqnarray}

The proof is based on the following sequence of claims:
\begin{enumerate}
\item The meeting time of two random walkers following $\mathcal{X}'$ is less than $2\mathcal{H}_{P^B}{(\mathcal{G})}$. 
\item  The meeting time of random process $\mathcal{X}$ and  $\mathcal{X}'$ satisfies $\mathcal{M}_{\mathcal{X}}(\mathcal{G}) \leq 2\mathcal{M}_{\mathcal{X}'}(\mathcal{G})$.
\item There holds that  $\mathcal{M}_{\mathcal{X}}(\mathcal{G}) < 4\mathcal{H}_{P^B}{(\mathcal{G})}$.
\end{enumerate}

The formal proof is as follows:

For convenience, we adopt the following notation: Let $\mathcal{H}(\bar{x},y)$ denote the weighted average of $\mathcal{H}(u,y)$ over all neighbors $u$ of $x$; Let $\mathcal{M}(\bar{x},y)$ denote the weighted average of $\mathcal{M}(u,y)$ over all neighbors $u$ of $x$; Let $\phi(\bar{x},y)$ denote the weighted average of $\phi(u,y)$ over all neighbors $u$ of $x$. Weightings are according to the edge weights.

Similar as in \cite{Coppersmith}, define a \emph{potential function}
\begin{equation}
\label{eqn:potential}
\phi(x,y) := \mathcal{H}_{P^B}(x,y) + \mathcal{H}_{P^B}(y, t) - \mathcal{H}_{P^B}(t,y),
\end{equation}
 where $t$ is a hidden vertex on the graph. By Corollary \ref{circle}, $\phi(x,y)$ is symmetric, i.e. $\phi(x,y) = \phi(y,x)$. By the definition of meeting time, $\mathcal{M}$ is also symmetric, i.e. $\mathcal{M}(x,y) = \mathcal{M}(y,x)$. Next we use $\phi$ to bound the meeting time. 
 
 By the definition of hitting time, for $x \neq y$ we have
\begin{eqnarray} \nonumber
\mathcal{H}_{P^B}(x, y) 
&=& 1 + P^B_{xx}\mathcal{H}_{P^B}(x,y) + \sum_{i\in \mathcal{N}_x}P^B_{xi}\mathcal{H}_{P^B}(i,y)\\\nonumber
&=&1 + w_{xx}\mathcal{H}_{P^B}(x,y) + \sum_{i\in \mathcal{N}_x}w_{xi}\mathcal{H}_{P^B}(i,y), \\
\end{eqnarray}
i.e.,
\begin{eqnarray}\nonumber \label{eqn:p2}
\mathcal{H}_{P^B}(x, y) &=& \frac{1}{\sum_{i\in \mathcal{N}_x}w_{xi}} + \frac{\sum_{i\in \mathcal{N}_x}w_{xi}\mathcal{H}_{P^B}(i,y)}{\sum_{i\in \mathcal{N}_x}w_{xi}} \\
&=& \frac{1}{\sum_{i\in \mathcal{N}_x}w_{xi}} + \mathcal{H}(\bar{x},y).
\end{eqnarray}
So for $x \neq y$, by Equation (\ref{eqn:potential}) and (\ref{eqn:p2}),
\begin{equation} \label{phi}
\phi(x,y) = \frac{1}{\sum_{i\in \mathcal{N}_x}w_{xi}} + \phi(\bar{x},y).
\end{equation}
Similarly, by the definition of meeting time, we have,
\begin{eqnarray}\nonumber 
\mathcal{M}_{\mathcal{X}'}(x, y)  &= & 1 + \left( 1 - \sum_{i \in \mathcal{N}_x}P^B_{xi} - \sum_{j \in \mathcal{N}_y}P^B_{yj}\right)\mathcal{M}_{\mathcal{X}'}(x,y) \\ \nonumber 
& + & \sum_{i \in \mathcal{N}_x}P^B_{xi}\mathcal{M}_{\mathcal{X}'}(i,y) \\
& + & \sum_{j \in \mathcal{N}_y}P^B_{yj}\mathcal{M}_{\mathcal{X}'}(x,j).
\label{universal}
\end{eqnarray}
Note that   (\ref{universal}) also holds for $x\in \mathcal{N}_y$. We now have

\begin{eqnarray}\nonumber
 &&\left( \sum_{i \in \mathcal{N}_x}P^B_{xi} + \sum_{j \in \mathcal{N}_y}P^B_{yj}\right)\mathcal{M}_{\mathcal{X}'}(x,y)  \\
 &=&   1 +  \sum_{i \in \mathcal{N}_x}P^B_{xi}\mathcal{M}_{\mathcal{X}'}(i,y) 
 +  \sum_{j \in \mathcal{N}_y}P^B_{yj}\mathcal{M}_{\mathcal{X}'}(x,j).
 \label{ineq}
\end{eqnarray}

  (\ref{ineq}) shows that at least one of the two inequalities below holds:
\begin{equation}
\mathcal{M}_{\mathcal{X}'}(x, y) > \frac{\sum_{i \in \mathcal{N}_x}P^B_{xi}\mathcal{M}_{\mathcal{X}'}(i,y)}{\sum_{i \in \mathcal{N}_x}P^B_{xi}} = \mathcal{M}_{\mathcal{X}'}(\bar{x},y)
\end{equation}
\begin{equation}
\mathcal{M}_{\mathcal{X}'}(x, y) > \frac{\sum_{j \in \mathcal{N}_y}P^B_{yj}\mathcal{M}_{\mathcal{X}'}(x,j)}{\sum_{j \in \mathcal{N}_y}P^B_{yj}} = \mathcal{M}_{\mathcal{X}'}(x,\bar{y})
\label{yineq}
\end{equation}
Without loss of generality, suppose that (\ref{yineq}) holds (otherwise, we can prove the other way around). From (\ref{ineq}), we have

\begin{eqnarray}\nonumber 
&& \sum_{i \in \mathcal{N}_x}P^B_{xi} \mathcal{M}_{\mathcal{X}'}(x,y)  
  =    1 +  \sum_{i \in \mathcal{N}_x}P^B_{xi}\mathcal{M}_{\mathcal{X}'}(i,y) \\
 & + &  \sum_{j \in \mathcal{N}_y}P^B_{yj}\mathcal{M}_{\mathcal{X}'}(x,j) - \sum_{j \in \mathcal{N}_y}P^B_{yj}\mathcal{M}_{\mathcal{X}'}(x,y).
\end{eqnarray}

i.e.,
\begin{eqnarray}\nonumber  
\mathcal{M}_{\mathcal{X}'}(x,y)  
   & = &   \frac{1}{\displaystyle\sum_{i \in \mathcal{N}_x}P^B_{xi}} +  \mathcal{M}_{\mathcal{X}'}(\bar{x},y) \\\nonumber
& + &  \frac{\displaystyle\sum_{j \in \mathcal{N}_y}P^B_{yj}\left(\mathcal{M}_{\mathcal{X}'}(x,\bar{y}) - \mathcal{M}_{\mathcal{X}'}(x,y)\right)}{\displaystyle\sum_{i \in \mathcal{N}_x}P^B_{xi}} \\
& <  & \frac{1}{\displaystyle\sum_{i \in \mathcal{N}_x}w_{xi}} + \mathcal{M}_{\mathcal{X}'}(\bar{x},y).
\label{lastineq}
\end{eqnarray}
Now we claim that $\mathcal{M}_{\mathcal{X}'}(x,y) \le \phi(x,y)$. Suppose it is not the case. Let $\beta = \max_{x,y}\{\mathcal{M}_{\mathcal{X}'}(x,y) - \phi(x,y)\}$. Among all the pairs $x, y$ realizing $\beta$, choose any pair.
It is clear that $x \neq y$, since $\mathcal{M}_{\mathcal{X}'}(x,x) = 0 \le \phi(x,x)$. By (\ref{phi}) and (\ref{lastineq}), 
\begin{eqnarray}\nonumber
\mathcal{M}_{\mathcal{X}'}(x,y) &=& \phi(x,y) + \beta 
= \frac{1}{\sum_{i\in \mathcal{N}_x}w_{xi}} + \phi(\bar{x}, y) + \beta\\  \nonumber
&\ge& \frac{1}{\sum_{i\in \mathcal{N}_x}w_{xi}} + \mathcal{M}_{\mathcal{X}'}(\bar{x},y) \\
 &>& \mathcal{M}_{\mathcal{X}'}(x,y).
\end{eqnarray}
This is a contradiction. From the definition of the \emph{potential} function in (\ref{eqn:potential}), we have $\phi(x,y) < 2\mathcal{H}_{P^B}{(\mathcal{G})}$. 
Thus $\mathcal{M}_{\mathcal{X}'}(\mathcal{G}) \le  \phi(x,y) < 2\mathcal{H}_{P^B}{(\mathcal{G})}$.

Now we are ready to complete the proof of Lemma \ref{lem:meeting}. We couple the Markov chains $\mathcal{X}$ and $\mathcal{X'}$ so that they are equal until the two random walkers become neighbors. Note that half of the time in expectation almost surely when the walkers in $\mathcal{X}'$ meet, they do not meet in  $\mathcal{X}$, but stay in the same position. We claim that $\mathcal{M}_{\mathcal{X}}(\mathcal{G}) \leq 2\mathcal{M}_{\mathcal{X}'}(\mathcal{G})$. 

In the random process $\mathcal{X}'$, when two random walkers $m$, $n$ meet, instead of finishing the process, we let them exchange positions and continue the random walks according to $P_{\mathcal{X}'\textrm{joint}}$. The expected length of each exchange is less than or equal to $\mathcal{M}_{\mathcal{X}'(\mathcal{G})}$. At each cross, the random process $\mathcal{X}$ finishes with a probability of 1/2, independently. Thus for any $x, y \in \mathcal{V}$ we have 
\begin{equation}
\mathcal{M}_{\mathcal{X}}(x,y) \leq \sum_{i = 1}^{\infty}\left(\frac{1}{2}\right)^ii\mathcal{M}_{\mathcal{X}'}(\mathcal{G}) = 2\mathcal{M}_{\mathcal{X}'}(\mathcal{G}).
\end{equation}

This completes the proof.
\end{proof}

Let $CT_{\mathcal{G}}(v)$ denote the expected time for a random walker starting from node $v$ to meet all other random walkers who are also taking random walks on the same graph but starting from different nodes. Define 
$CT(\mathcal{G}) = \max_{v \in \mathcal{V}} CT_{\mathcal{G}}(v).$

\begin{lem}\label{max}
Let $\mathcal{M}_{\mathcal{X}}(\mathcal{G})$ be the meeting time of the biased random walk $\mathcal{X}$ defined in Section \ref{def}. Then
\begin{equation}
\label{ }
CT(\mathcal{G}) = O(\mathcal{M}_{\mathcal{X}}(\mathcal{G})\log N).
\end{equation}\end{lem}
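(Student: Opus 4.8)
The plan is to reduce the multi-walker quantity $CT_{\mathcal{G}}(v)$ to the pairwise meeting time $\mathcal{M}_{\mathcal{X}}(\mathcal{G})$ by a standard epoch-based boosting argument, paying a $\log N$ factor that comes from a union bound over the $N-1$ other walkers. Fix the starting node $v$, and for every other walker labeled $u$ let $T_u$ denote the first time the $v$-walker meets the $u$-walker in the sense of the meeting-time definition. Since the $v$-walker has met all others exactly when it has met each one individually, we have $CT_{\mathcal{G}}(v) = \mathbf{E}\left[\max_{u \neq v} T_u\right]$, so it suffices to bound this expected maximum.

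First I would establish a geometric tail bound for each individual $T_u$. Set one epoch to have length $L := 2\mathcal{M}_{\mathcal{X}}(\mathcal{G})$. Because $\mathcal{M}_{\mathcal{X}}(\mathcal{G})$ is the maximum of the expected meeting time over \emph{all} pairs of starting positions, and because $\mathcal{X}$ is Markov, the expected residual time for the $v$- and $u$-walkers to meet, conditioned on not yet having met and on their current positions, is still at most $\mathcal{M}_{\mathcal{X}}(\mathcal{G})$. Applying Markov's inequality within each epoch then yields $P\left(T_u > (k+1)L \mid T_u > kL\right) \leq 1/2$, and iterating gives $P(T_u > kL) \leq 2^{-k}$ for every integer $k \geq 0$.

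Next I would pass from one pair to all pairs by a union bound,
$$P\left(\max_{u \neq v} T_u > kL\right) \leq \sum_{u \neq v} P(T_u > kL) \leq (N-1)\,2^{-k},$$
which needs no independence among the events $\{T_u > kL\}$ — fortunate, since the walkers interact. Finally I would integrate the tail,
$$\mathbf{E}\left[\max_{u \neq v} T_u\right] \leq L \sum_{k=0}^{\infty} \min\left(1,\, (N-1)\,2^{-k}\right),$$
and split the sum near $k \approx \log_2 N$: the first $O(\log N)$ terms each contribute at most $1$, while the geometric remainder contributes only $O(1)$. This gives $\mathbf{E}\left[\max_{u \neq v} T_u\right] = O(L \log N) = O(\mathcal{M}_{\mathcal{X}}(\mathcal{G}) \log N)$, and taking the maximum over $v$ completes the proof.

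The step I expect to be the main obstacle is the restart inequality of the second paragraph: that from any intermediate unmet configuration the conditional expected residual meeting time is still bounded by the global maximum $\mathcal{M}_{\mathcal{X}}(\mathcal{G})$. This requires that the two-body marginal of the $v$- and $u$-walkers inside the full multi-walker system still evolves according to $\mathcal{X}$, so that $\mathcal{M}_{\mathcal{X}}(\mathcal{G})$ genuinely controls its residual meeting time, together with the strong Markov property that licenses treating each epoch as a fresh start. Once this uniform control over the conditional residual meeting time is secured, the remaining ingredients — Markov's inequality, the union bound, and the tail summation — are routine.
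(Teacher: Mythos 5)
Your proof is correct and follows essentially the same route as the paper's: epochs of length $\Theta(\mathcal{M}_{\mathcal{X}}(\mathcal{G}))$, Markov's inequality applied within each epoch (licensed by the restart/strong Markov property you flag), and a union bound over the $N-1$ other walkers. The only difference is end-game bookkeeping: you integrate the geometric tail directly, while the paper stops after $\ln N$ periods and absorbs the failure event using the crude $N\mathcal{M}_{\mathcal{X}}(\mathcal{G})$ bound --- both give the same $O(\mathcal{M}_{\mathcal{X}}(\mathcal{G})\log N)$ conclusion.
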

\begin{proof}
Since there are no more than $N$ consecutive meetings with random walkers that it never meets, we can easily get a union bound for $CT(\mathcal{G})$, which is $N\mathcal{M}_{\mathcal{X}}(\mathcal{G})$.

In order to obtain a tighter bound for $CT(\mathcal{G})$, we divide the random walk into $\ln N$ periods of length $k\mathcal{M}_{\mathcal{X}}(\mathcal{G})$ each, where $k$ is a constant. Let $a$ be the ``special" random walker trying to meet all other random walkers. For any period $i$ and any other random walker $v$, by the Markov inequality, we have
\begin{equation}
\Pr(\textrm{$a$ does not meet $v$ during period $i$})  
\le \frac{\mathcal{M}_{\mathcal{X}}(\mathcal{G})}{k\mathcal{M}_{\mathcal{X}}(\mathcal{G})} 
= \frac{1}{k}
\end{equation}
so
\begin{equation}
\Pr(\textrm{$a$ does not meet $v$ during any period})  
\le\left( \frac{1}{k}\right)^{\ln N} = N^{-\ln k}
\end{equation}
If we take the union bound, 
\begin{equation}
\Pr(\textrm{$a$ doesn't meet some walker during any period}) 
 \le  N\cdot N^{-\ln k}.
\end{equation}

Conditioning on whether or not the walker $a$ has met all other walkers after all $k\mathcal{M}_{\mathcal{X}}(\mathcal{G})\ln N$ steps, and using the previous $N\mathcal{M}_{\mathcal{X}}(\mathcal{G})$ upper bound, we have

\begin{eqnarray}\nonumber
CT(\mathcal{G}) &\le& k\mathcal{M}_{\mathcal{X}}(\mathcal{G})\ln N + N\cdot N^{-\ln k}\cdot N\mathcal{M}_{\mathcal{X}}(\mathcal{G})\\ 
&=& k\mathcal{M}_{\mathcal{X}}(\mathcal{G})\ln N +  N^{2-\ln k}\mathcal{M}_{\mathcal{X}}(\mathcal{G})
\end{eqnarray}

When $k$ is sufficiently large, say $k \ge e^6$, the second term is small, so
\begin{equation}
\label{ }
CT(\mathcal{G}) < (k + 1)\mathcal{M}_{\mathcal{X}}(\mathcal{G})\ln N.
\end{equation}
This completes the proof. 
\end{proof}

\subsection{An Upper Bound on Quantized Consensus} \label{boundqc}
Recall that a non-trivial exchange in quantized consensus happens when the difference in values at the nodes is greater than 1. 

Let $\mathbf{Q}(t)$ denote a vector of values all nodes holding at time $t$. Set $\bar{Q} = Q_{\rm{sum}}/N$, where $Q_{\rm{sum}}$ is defined in   (\ref{qsum}).

We construct a Lyapunov function $L_{\bar{Q}}$ \cite{Zhu,Kashyap,Nedic} as:

\begin{equation}
L_{\bar{Q}}(\mathbf{Q}(t)) = \sum_{i = 1}^N \left(Q_i(t) - \bar{Q}\right)^2.
\end{equation} 

Let $m = \min_i Q_i(0)$ and $M = \max_i Q_i(0)$. It is easy to see that $L_{\bar{Q}}(\mathbf{Q}(0)) \le \frac{(M-m)^2N}{4}$. Equality holds when half of the values are $M$ and others are $m$.

\begin{lem}
\label{cor:lfunc}
In a non-trivial meeting, $$L_{\bar{Q}}(\mathbf{Q}(t)) \ge L_{\bar{Q}}(\mathbf{Q}(t+1)) + 2.$$
\end{lem}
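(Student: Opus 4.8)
The plan is to exploit the fact that a non-trivial meeting changes only the two values $Q^{(i)}(t)$ and $Q^{(j)}(t)$ at the activated edge $(i,j)$, leaving every other coordinate of $\mathbf{Q}(t)$ untouched. Consequently, in the difference $L_{\bar{Q}}(\mathbf{Q}(t)) - L_{\bar{Q}}(\mathbf{Q}(t+1))$ all the summands except those indexed by $i$ and $j$ cancel identically, so I only need to track the contribution of these two nodes. This reduces a sum over $N$ terms to a two-term computation.

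First I would fix the notation of the update rule, taking without loss of generality $Q^{(i)}(t) \le Q^{(j)}(t)$, so that a non-trivial meeting gives $Q^{(i)}(t+1) = Q^{(i)}(t) + 1$ and $Q^{(j)}(t+1) = Q^{(j)}(t) - 1$. Setting $a := Q^{(i)}(t) - \bar{Q}$ and $b := Q^{(j)}(t) - \bar{Q}$, the two affected terms contribute $a^2 + b^2$ before the update and $(a+1)^2 + (b-1)^2$ afterward, while all remaining terms are unchanged.

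Next I would expand and simplify. The key observation is that the dependence on $\bar{Q}$ drops out entirely, leaving a linear expression in the gap:
$$\big(a^2+b^2\big) - \big((a+1)^2+(b-1)^2\big) = 2(b-a) - 2 = 2\big(Q^{(j)}(t)-Q^{(i)}(t)\big) - 2.$$
Finally, invoking the defining condition of a non-trivial meeting, namely $Q^{(j)}(t) - Q^{(i)}(t) \ge 2$, the right-hand side is at least $2\cdot 2 - 2 = 2$, which yields exactly $L_{\bar{Q}}(\mathbf{Q}(t)) \ge L_{\bar{Q}}(\mathbf{Q}(t+1)) + 2$.

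There is no genuine obstacle here: once the cancellation of the $N-2$ unaffected coordinates is noted, the lemma collapses to a one-line algebraic identity, and the role of the non-trivial-meeting hypothesis is simply to ensure the gap is at least $2$. The only point I would state carefully is that the inequality is tight precisely when $Q^{(j)}(t) - Q^{(i)}(t) = 2$, so the additive constant $2$ cannot be improved in general, while strictly larger gaps decrease $L_{\bar{Q}}$ by strictly more.
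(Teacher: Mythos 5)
Your proof is correct and takes essentially the same route as the paper's: both reduce the difference $L_{\bar{Q}}(\mathbf{Q}(t)) - L_{\bar{Q}}(\mathbf{Q}(t+1))$ to the two affected coordinates and compute it as $2\bigl(Q^{(j)}(t)-Q^{(i)}(t)\bigr) - 2 \ge 2$ using the non-trivial-meeting hypothesis. If anything, your version is marginally more careful, since you carry the $\bar{Q}$-shift explicitly through the variables $a,b$ and observe that it cancels, whereas the paper's displayed computation drops $\bar{Q}$ from the squares without comment.
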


\begin{proof}
A non-trivial meeting follows the first update rule of quantized consensus algorithm in Section \ref{sub:qc}. 

Suppose $Q_i(t) = x_1$ and $Q_j(t) = x_2$ have a non-trivial meeting at time $t$, and the rest of the values stay unchanged. Without loss of generality, let $x_1 \le x_2 - 2$.  We have

\begin{eqnarray} \nonumber
&&L_{\bar{Q}}(\mathbf{Q}(t)) - L_{\bar{Q}}(\mathbf{Q}(t+1)) \\ \nonumber
&=& x_1^2 +   x_2^2 - (x_1 + 1)^2 - (x_2 - 1)^2 \\ 
&=& 2(x_2 - x_1) - 2 \ge 2.
\end{eqnarray}
\end{proof}

\begin{proof}[Proof of Theorem 1]
Corollary \ref{cor:lfunc} shows that the Lyapunov function is decreasing. The convergence of quantized consensus must be reached after at most $\gamma = \frac{(M-m)^2N}{8}$ \emph{non-trivial meetings}. When every random walker has met each of the other random walkers  $\gamma$ times, all the \emph{non-trivial meetings} must have finished. The rest of the proof follows from Corollary \ref{max}, except we divide the random walks into $\ln(\gamma N/2)$ periods of length $k\mathcal{M}_{\mathcal{X}}(\mathcal{G})$ instead of $\ln N$. By Corollary \ref{lem:hitting}, \ref{lem:meeting} and  \ref{max}, this process finishes in $O(N^3\log N)$ time. 
\end{proof}

\begin{cor}
\label{cor:tighter}
Given a network  $\mathcal{G}$ of N nodes, an upper bound for the expected convergence time of the quantized consensus algorithm is $O(\mathcal{M}_{\mathcal{X}}(\mathcal{G})\log N)$. 
\end{cor}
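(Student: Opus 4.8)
The plan is to observe that Corollary \ref{cor:tighter} is precisely the intermediate estimate that the proof of Theorem 1 establishes \emph{before} it substitutes the $O(N^3)$ bound on the meeting time coming from Lemmas \ref{lem:hitting} and \ref{lem:meeting}. Accordingly, I would reuse the Lyapunov-function argument to control the number of non-trivial meetings and then invoke the period-splitting argument of Lemma \ref{max}, but keep every estimate expressed in terms of $\mathcal{M}_{\mathcal{X}}(\mathcal{G})$ rather than numerically bounding it. The corollary is then simply the topology-independent engine of Theorem 1 with the resistance-based estimate left unsubstituted.

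First I would recall from Lemma \ref{cor:lfunc} that each non-trivial meeting decreases $L_{\bar{Q}}$ by at least $2$, while $L_{\bar{Q}}(\mathbf{Q}(0)) \le (M-m)^2 N/4$ and $L_{\bar{Q}} \ge 0$ throughout; hence the total number of non-trivial meetings is at most $\gamma = (M-m)^2 N/8$, and convergence is reached once these are all exhausted. A sufficient condition for exhaustion is that every one of the $N$ walkers has met each of the others $\gamma$ times, which amounts to on the order of $\gamma N/2$ required meeting events. I would then run the argument of Lemma \ref{max} verbatim with this larger event count: dividing time into $\ln(\gamma N/2)$ periods of length $k\,\mathcal{M}_{\mathcal{X}}(\mathcal{G})$, applying the Markov inequality to bound by $1/k$ the chance that a given required meeting fails in a period, and taking a union bound over the $\gamma N/2$ events. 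For $k$ large enough (as in Lemma \ref{max}, $k \ge e^6$), the tail correction is negligible and the expected convergence time is at most $(k+1)\,\mathcal{M}_{\mathcal{X}}(\mathcal{G})\,\ln(\gamma N/2)$. Since $\gamma N/2$ is polynomial in $N$ whenever the initial spread $M-m$ is, $\ln(\gamma N/2) = O(\log N)$, which yields the claimed $O(\mathcal{M}_{\mathcal{X}}(\mathcal{G})\log N)$.

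The step I expect to be the main obstacle is the bookkeeping that connects ``all non-trivial meetings exhausted'' to the clean ``every pair met $\gamma$ times'' event to which the concentration argument applies: one must argue that forcing each pair of walkers to collide $\gamma$ times cannot leave any non-trivial meeting outstanding, despite the fact that intervening trivial meetings reshuffle the values the walkers carry. A secondary point worth making explicit is the tacit assumption that $M-m$ grows at most polynomially in $N$, without which the logarithmic factor would not simplify to $O(\log N)$. Under that assumption the corollary follows immediately, and its value is that any sharper bound on $\mathcal{M}_{\mathcal{X}}(\mathcal{G})$ for a particular graph---computed via effective resistance and the commute-time identity \eqref{commute}---translates directly into a sharper convergence-time bound.
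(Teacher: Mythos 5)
Your proposal is correct and takes essentially the same route as the paper: the paper's own proof of this corollary is a one-line pointer back to the proofs of Lemma \ref{max} and Theorem 1, i.e., exactly the combination you spell out---the Lyapunov count of at most $\gamma = (M-m)^2N/8$ non-trivial meetings, followed by the period-splitting and union-bound argument with everything kept in terms of $\mathcal{M}_{\mathcal{X}}(\mathcal{G})$ instead of substituting the $O(N^3)$ bound from Lemmas \ref{lem:hitting} and \ref{lem:meeting}. The two caveats you flag (the bookkeeping from ``every pair met $\gamma$ times'' to ``no non-trivial meeting outstanding,'' and the tacit assumption that $M-m$ is polynomial in $N$) are present but unaddressed in the paper's argument as well, so they reflect no divergence in approach.
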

\begin{proof}
This is direct result from the proof of Corollary \ref{max} and Theorem 1.
\end{proof}

For example, for a fully connected network $\mathcal{C}$ (i.e.complete graph) with $N$ nodes, at time $t$, the probability of any two random walker $i$, $j$ to meet is $P^B(\mathcal{C}) = \frac{2}{N(N-1)}$. It is easy to get that $\mathcal{M}_{P^B}(\mathcal{C}) = \frac{N(N-1)}{2} = O(N^2)$. Thus by Corollary \ref{cor:tighter}, the upper bound for the expected convergence time of the quantized consensus algorithms is $O(N^2\log N)$. Note that this result agrees with the analysis of convergence time of complete graph in Section IV.A in \cite{Drief}, where the authors derived an upper bound of $O(N\log N)$, regarding to local clock (See Section 2 for definition of local clock and global clock). Since every second, there are number of $N$ clock ticks on average, this is hence equivalent to a  $O(N^2\log N)$ bound regarding to the number of clock ticks in our case. More examples and simulations can be found in \cite{full}.

\section{conclusions}
\label{sec: discussion}
In this paper, we use the theory of electric networks, random walks, and couplings of Markov chains to derive a polynomial bound on convergence time with respect to the size of the network, for a class of distributed quantized consensus algorithms \cite{Benezit,Kashyap}. We improve the state of art bound of $O(N^5)$ for quantized consensus algorithms to $O(N^3\log N)$. Our analysis can be extended to a tighter bound for certain network topologies using the effective resistance analogy.  Our results provide insights to the performance of the quantized consensus algorithms.



\bibliographystyle{IEEEbib}
\bibliography{myrefs}

\end{document}